\newcommand{\comment}[1]{} 
\newtheorem{theorem}{Theorem}
\newtheorem{lemma}[theorem]{Lemma}
\newtheorem{prop}[theorem]{Proposition}
\newtheorem{conjecture}{Conjecture}
\newcounter{quote}
\newcommand\blfootnote[1]{%
  \begingroup
  \renewcommand\thefootnote{}\footnote{#1}%
  \addtocounter{footnote}{-1}%
  \endgroup
}
\newcommand{\ra}[2]{\ensuremath{\overrightarrow{#1 #2}}}
\newcommand{\la}[2]{\ensuremath{\overleftarrow{#1} \hspace*{-2pt} #2}}
\newcommand{\crs}[1]{\overline{\operatorname{cr}}(#1)}
\title{Counting the Number of Crossings in Geometric Graphs}
\author{Frank Duque\thanks{Instituto de Matemáticas UNAM and Instituto de Física UASLP, Mexico; { \tt rodrigo.duque@udea.edu.co}; partially supported by FORDECYT 265667 (Mexico).} 
\and Ruy Fabila-Monroy\thanks{Departamento de Matem\'aticas CINVESTAV, Mexico. {\tt ruyfabila@math.cinvestav.edu.mx, cmhidalgo@math.cinvestav.mx;} partially supported
by Conacyt of Mexico grant 253261.}
\and César Hern\'andez-V\'elez\thanks{Facultad de Ciencias UASLP; {\tt cesar.velez@uaslp.mx}}
\and Carlos Hidalgo-Toscano\footnotemark[2] }
\begin{document}


\maketitle

\blfootnote{\begin{minipage}[l]{0.3\textwidth} \includegraphics[trim=10cm 6cm 10cm 5cm,clip,scale=0.15]{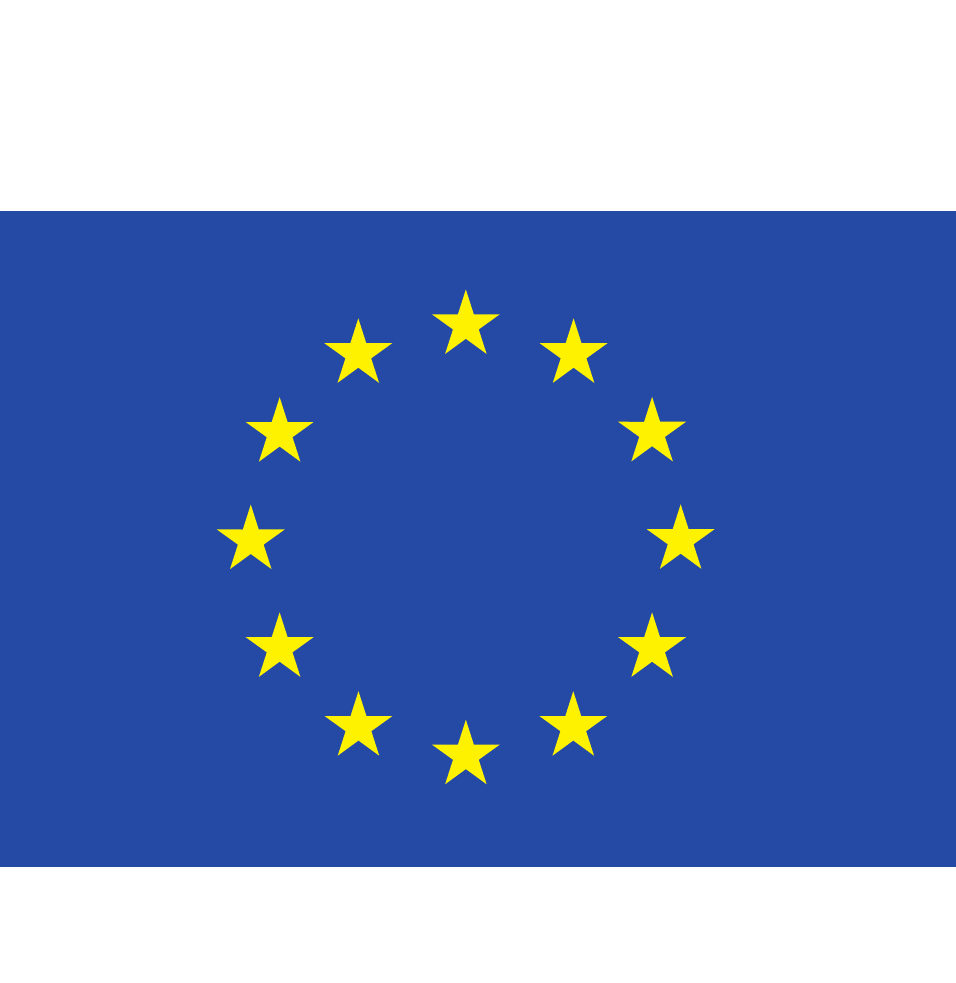} \end{minipage}  \hspace{-3.5cm} \begin{minipage}[l][1cm]{0.82\textwidth}
 	  This project has received funding from the European Union's Horizon 2020 research and innovation programme under the Marie Sk\l{}odowska-Curie grant agreement No 734922.
 	\end{minipage}}

\begin{abstract} 
A geometric graph is a graph whose vertices are points in general position in the plane and its edges are straight line
segments joining these points. In this paper we give an $O(n^2 \log n)$ time algorithm to compute the number of pairs
of edges that cross in a geometric graph on $n$ vertices. For 
layered graphs and convex geometric graphs the algorithm takes $O(n^2)$ time. 
\end{abstract}

\section{Introduction}\label{sec:intro}

A \emph{geometric graph} is a graph  whose vertices are points 
in general position in the plane; and its edges are straight line segments
joining these points. A pair of edges of a geometric graph \emph{cross} if they intersect in their interior;
the \emph{number of crossings} of a geometric graph is the number of pairs of its edges that cross. 
Let $G:=(V,E)$ be a geometric graph on $n$ vertices; and let $H$ be a  graph. 
If $G$ and $H$ are isomorphic as graphs then we say that $G$ is a \emph{rectilinear
drawing} of $H$.
The \emph{rectilinear crossing number} of $H$ is the minimum number of crossings
that appear in all of its rectilinear drawings. We abuse notation and use 
$\crs{H}$ and $\crs{G}$ to denote the rectilinear crossing number of $H$ and
the number of crossings of $G$, respectively. 

Computing the rectilinear crossing number of the complete graph $K_n$ on $n$ vertices is an important
and well known problem in Combinatorial Geometry. The current best bounds on $\crs{K_n}$ are
\[ 0.379972 \binom{n}{4} < \crs {K_n}  < 0.380473\binom{n}{4}+\Theta(n^3). \]
The lower bound is due to {\'A}brego, Fern{\'a}ndez-Merchant, Lea{\~n}os and Salazar~\cite{lower}.
The upper bound is due to Fabila-Monroy and L{\'o}pez~\cite{upper_old}. In an upcoming paper, Aichholzer, Duque,
Fabila-Monroy, Garc\'ia-Quintero and Hidalgo-Toscano~\cite{upper_new} have further improved the upper bound to 
\[\crs{K_n}  < 0.3804491\binom{n}{4}+\Theta(n^3). \]
 For more information on crossing numbers (rectilinear or other variants) 
we recommend the survey by Schaefer~\cite{cr_survey}.

A notable property of the improvements of~\cite{upper_old,upper_new} on the upper bound of $\crs{K_n}$
is that they rely on finding rectilinear drawings of $K_n$ with few crossings, for some specific values
of $n$; this is done via heuristics that take a rectilinear drawing of $K_n$ and move its vertices in various ways with the
aim of decreasing the number of crossings.
In this approach it is instrumental that the computation of the number of crossings
is done as fast as possible.  In this paper we present an algorithm
 to compute $\crs{G}$ in $O(n^2\log n)$ time. We hope that our algorithm will pave the way for finding
new upper bounds on the rectilinear crossing number of various classes of graphs.

The current best algorithm for counting the number of intersections 
among a set of $m$ line segments in the plane runs in $O(m^{4/3} \log^{1/3}m)$ time; it was given by Chazelle~\cite{chazelle1993}.
This provides an $O(|E|^{4/3} \log^{1/3} |E|)$ time algorithm for computing $\crs{G}$; this running time can be as high as $\Theta(n^{8/3} \log^{1/3} n)$
when $G$ has a quadratic number of edges.
The running time can be improved for some classes of geometric graphs:
when $G$ is a complete geometric graph, Rote, Woeginger, Zhu, and Wang~\cite{rote1991} give an $O(n^2)$ time algorithm for computing $\crs{G}$; and
Waddle and Malhotra~\cite{waddle1999}  give an $O(|E| \log |E|)$ algorithm for computing $\crs{G}$ 
when  $G$ is a bilayered graph. 
 For layered graphs and convex geometric graphs
 a slight modification of our algorithm runs in $O(n^2)$ time.

\section{The algorithm}\label{sec:algorithm}

In what follows let $G:=(V,E)$ be a geometric graph on $n$ vertices. We make some general position
assumptions: no two vertices of $G$ have the same $x$-coordinate nor the same $y$-coordinate; and no two edges of $G$ are parallel.
In \cite{upper_old}, the authors give an $O(n^2)$ time algorithm for computing $\crs{G}$ when $G$ is a
complete geometric graph; the authors define 
``patterns'' on the set of vertices of $G$; these patterns can be computed in $O(n^2)$ time and
$\crs{G}$ depends on the number of these patterns. We follow a similar approach. 

 Let $p$ and $q$ be two points in the plane. Let $\ra{p}{q}$ be the ray with apex $p$ and that passes through $q$; let $\la{p q}$ be the ray
 with apex $p$, parallel to $\ra{p}{q}$, and with opposite direction to $\ra{p}{q}$.
 Let $(u,v,e)$ be a triple where $u$ and $v$ are a pair of adjacent vertices of $G$, and $e$ is an edge of $G$ not incident to $u$ or $v$.
 We say that $(u,v,e)$ is a pattern of
\begin{itemize}
	\item \textbf{Type $A$)} if $\ra{u}{v}$ intersects $e$; and
	\item \textbf{Type $B$)} if $\la{uv}$ intersects $e$.
\end{itemize}
See Figure~\ref{fig:patterns}. 

Consider a pair of non-incident edges of $G$. This pair of edges provides patterns of type $A$ and type
$B$ according to the following three cases. If the pair of edges is crossing then they
provide exactly four patterns of type $A$ and zero patterns of type $B$;
if the pair of edges is non-crossing and the line containing one of them intersects the interior
of the other edge, then they provide exactly one pattern of type $A$ 
and one pattern of type $B$; and if the pair of edges is non-crossing and the intersection
point of the two supporting lines of these edges lies outside the union of these edges,
then they do not provide patterns of type $A$ nor patterns of type $B$. 
See~Figure~\ref{fig:patterns}. Let $A(G)$ and $B(G)$ be the number of patterns of type $A$ and type $B$ defined by $G$, respectively; 
we have the following. 
 \begin{figure}
 	\centering
 	\includegraphics[width=0.33\textwidth]{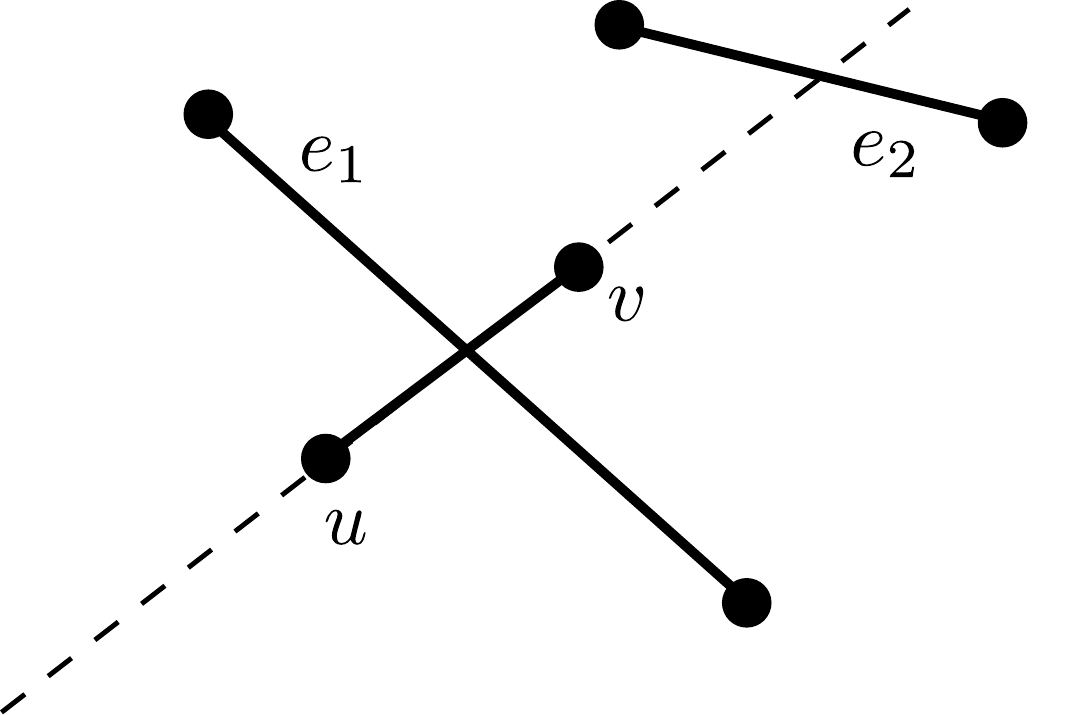}
 	\caption{$(u,v,e_1)$, $(v,u,e_1)$, and $(u, v, e_2)$ are of type $A$; $(v,u,e_2)$ is of type $B$; and
 	neither the endpoints of $e_1$ and the edge $e_2$, nor the endpoints of $e_2$ and the edge $e_1$ produce
 	patterns of type $A$ or type $B$.}
 	\label{fig:patterns}
 \end{figure}

\begin{prop}\label{lem:patterns}
  \[\crs{G}=\frac{A(G)-B(G)}{4}.\]
\end{prop}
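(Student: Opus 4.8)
The plan is to partition the patterns of $G$ according to the pair of vertex-disjoint edges that produces them, and then run the three-case analysis described above one edge-pair at a time. In any pattern $(u,v,e)$ the edges $uv$ and $e$ are vertex-disjoint, so $\{uv,e\}$ is an unordered pair of non-incident edges of $G$. Conversely, a pair of non-incident edges $f_1=\{u,v\}$ and $f_2=\{w,x\}$ can give rise only to the four candidate triples $(u,v,f_2)$, $(v,u,f_2)$, $(w,x,f_1)$, $(x,w,f_1)$; these are pairwise distinct, they all have $\{f_1,f_2\}$ as their associated edge-pair, and every pattern of $G$ is one of the four candidates of exactly one such edge-pair. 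Hence, writing $a(f_1,f_2)$ and $b(f_1,f_2)$ for the number of these four candidates that are patterns of type $A$ and of type $B$ respectively, we have $A(G)=\sum a(f_1,f_2)$ and $B(G)=\sum b(f_1,f_2)$, where the sums range over all unordered pairs of non-incident edges of $G$.

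\emph{The local count.} Fix such a pair $f_1,f_2$, with supporting lines $\ell_1,\ell_2$. Because no two edges of $G$ are parallel, $\ell_1$ and $\ell_2$ meet in exactly one point $y$; because the vertices of $G$ are in general position, $y$ is not a vertex, so $y\in f_i$ precisely when $y$ lies in the interior of $f_i$. Moreover each of the four rays emanating from the endpoints of $f_1$ is contained in $\ell_1$, hence meets $f_2$ if and only if it contains $y$ and $y\in f_2$ (and symmetrically with the roles of $f_1,f_2$ swapped); in particular no such ray passes through an endpoint of the other edge. From this one checks the three mutually exclusive and exhaustive cases: (i) if $y\in f_1\cap f_2$, i.e.\ $f_1$ and $f_2$ cross, then all four candidates are of type $A$ and none of type $B$, so $a(f_1,f_2)-b(f_1,f_2)=4$; (ii) if $y$ lies in exactly one of $f_1,f_2$ — say $y\in f_2$, and along $\ell_1$ the point $y$ lies past $v$ — then $(u,v,f_2)$ is the unique type-$A$ candidate, $(v,u,f_2)$ (via $\la{v}{u}$, which points from $v$ toward $y$) is the unique type-$B$ candidate, and the candidates from the endpoints of $f_2$ contribute nothing since $\ell_2$ misses $f_1$ entirely, so $a(f_1,f_2)-b(f_1,f_2)=0$; (iii) if $y$ lies in neither $f_1$ nor $f_2$, then no candidate is a pattern and $a(f_1,f_2)-b(f_1,f_2)=0$. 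Thus $a(f_1,f_2)-b(f_1,f_2)$ equals $4$ when $f_1,f_2$ cross and $0$ otherwise.

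\emph{Summing up.} Summing over all unordered pairs of non-incident edges, and observing that two incident edges never cross (their shared vertex together with general position forbids an interior intersection), we obtain $A(G)-B(G)=\sum\bigl(a(f_1,f_2)-b(f_1,f_2)\bigr)=4\,\crs{G}$, and dividing by $4$ proves the proposition. I expect the only delicate point to be case (ii): one must track the two orientations of $f_1$ together with the side of $v$ on which $y$ lies along $\ell_1$ to be sure exactly one type-$A$ and one type-$B$ candidate arise, and to confirm that the endpoints of $f_2$ contribute nothing. The remaining cases and the general-position bookkeeping (ruling out $y$ being a vertex, or $\ell_1\parallel\ell_2$) are routine.
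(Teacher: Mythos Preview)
Your argument is correct and is exactly the approach the paper takes: the paper states the same three-case analysis for a pair of non-incident edges immediately before the proposition and then simply writes \qed, whereas you spell out the verification of each case and the summation. There is no substantive difference in method.
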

\qed


Before proceeding we give some definitions.
Let $u$ and $v$ be vertices of $G$.  Let $B(u,v)$ and $A(u,v)$ be the number of edges
of $G$ that intersect $\ra{u}{v}$ and  $\la{uv}$, respectively. Note that 
\[A(G)=\sum_{u \in V} \sum_{v \in N(u)} A(u,v) \text{ and } B(G)=\sum_{u \in V} \sum_{v \in N(u)} B(u,v).\]
Let $\textsc{N}_\textsc{Left}(v, uv)$ be set of neighbors of $v$ to the left of the directed line from $u$ to
$v$; and let $\textsc{N}_\textsc{Right}(v, uv)$ be set of neighbors of $v$ to the right of the directed line from $u$ to
$v$.  See Figure~\ref{fig:vw}. Let $\overleftarrow{u}$ and  $\overrightarrow{u}$  be the open horizontal rays with apex $u$ 
that go left and  right, respectively.
Finally, let $E(u)$ be the set of edges of $G$ that are incident to $u$.
Thus, $|\overleftarrow{u} \cap E|$ and $|\overrightarrow{u} \cap E|$ are equal to the number of edges of $G$ that cross  $\overleftarrow{u}$ and $\overrightarrow{u}$,
respectively;
and  $|\overleftarrow{u} \cap E(v)|$ and $|\overrightarrow{u} \cap E(v)|$ are equal to the number of edges of $G$ incident to $v$ that 
cross $\overleftarrow{u}$ and $\overrightarrow{u}$, respectively. 
\begin{figure} [htb]
      \centering
      \includegraphics[scale=0.8]{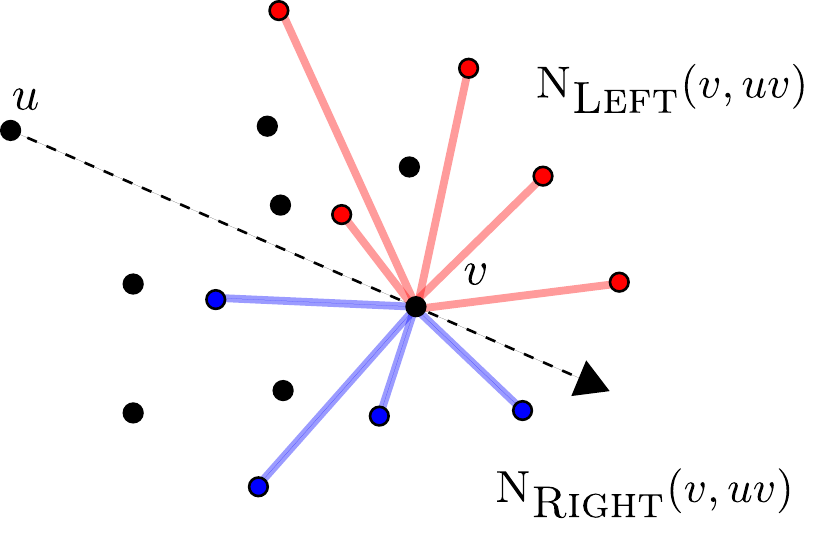}
    \caption{An illustration of $\textsc{N}_\textsc{Left}(v, uv)=5$ and $\textsc{N}_\textsc{Right}(v, uv)=4$}
    \label{fig:vw}
\end{figure}

Consider the following well known duality transform. Let $p:=(m,b)$ be a point; the dual of $p$ is the line
$p^\ast:y=mx-b$. Let $\ell:y=mx+b$ be a line; the dual of $\ell$ is the point $\ell^\ast:=(m,-b)$.
This  duality transform preserves various geometric relationships between points and lines. In particular, it
preserves point-line incidences. It also preserves above-below relationships between lines and points: if point $p$ is above (resp. below)
line $\ell$, then point $\ell^{\ast}$ is above (resp. below) line $p^{\ast}$. We compute the line arrangement $\mathcal{L}$ of the dual lines
of the points in $V$. This can be done in $O(n^2)$ time; see for example Chapter~8 of \cite{cgbook}.
Afterwards, we compute $A(G)$ and $B(G)$ in the following four steps.

\begin{lemma}[Step 1]\label{lem:step1}
In $O(n^2)$ time we can compute, for all $u\in V$, the counterclockwise order of the vertices in
$V\setminus \{u\}$ by angle  around $u$.
\end{lemma}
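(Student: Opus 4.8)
The plan is to use the point-line duality transform that the excerpt has just introduced. After dualizing the point set $V$ to an arrangement $\mathcal{L}$ of $n$ lines, the key observation is that the circular order of the other vertices by angle around a fixed vertex $u$ corresponds exactly to the order in which the dual line $u^{\ast}$ meets the other dual lines as one walks along $u^{\ast}$. More precisely, for two vertices $v,w \in V\setminus\{u\}$, the slope of the segment $uv$ is the $x$-coordinate of the intersection point $u^{\ast}\cap v^{\ast}$, and slopes along $u^{\ast}$ are monotone in the $x$-coordinate; hence sorting the intersection points $u^{\ast}\cap v^{\ast}$ (for $v\neq u$) by $x$-coordinate yields the order of the vertices $v$ by the slope of $uv$. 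Since the slope determines the direction of the ray $\ra{u}{v}$ up to the choice of the two halves of a line through $u$, and since there are no two vertices with the same $x$-coordinate (so no segment $uv$ is vertical) and no two edges parallel (so all these slopes are distinct), one traversal of $u^{\ast}$ in $\mathcal{L}$ reading off the crossings in $x$-order gives one ``half'' of the circular sweep around $u$; the points with larger $x$-coordinate than $u$ and those with smaller $x$-coordinate than $u$ fall on the two sides, and concatenating the two sublists (with the appropriate $180^{\circ}$ offset) produces the full counterclockwise angular order.

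The algorithmic steps I would carry out are: first, build $\mathcal{L}$ in $O(n^2)$ time using the standard incremental construction of a line arrangement (Chapter~8 of \cite{cgbook}), obtaining for each line $u^{\ast}$ the left-to-right sorted sequence of its vertices in the arrangement, i.e.\ the crossings $u^{\ast}\cap v^{\ast}$ in increasing $x$-order. Second, for each $u\in V$, read off this sequence along $u^{\ast}$: each crossing $u^{\ast}\cap v^{\ast}$ tells us the slope of $uv$, and this already lists all $v\in V\setminus\{u\}$ sorted by slope, which costs $O(n)$ per line and $O(n^2)$ total. Third, split this slope-sorted list at the index separating vertices $v$ with $x(v)<x(u)$ from those with $x(v)>x(u)$ (this split point is determined by the sign of the slope comparison, or equivalently can be identified in $O(n)$ time), and rotate/concatenate the two halves so that the list starts, say, at the vertex immediately counterclockwise from the positive horizontal ray $\overrightarrow{u}$; this yields the counterclockwise angular order around $u$. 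Summing over all $u$, the total time is $O(n^2)$.

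The main obstacle, and the part that needs care rather than cleverness, is correctly matching the $x$-order of crossings along $u^{\ast}$ to the \emph{angular} (not merely slope) order around $u$: slope only determines direction modulo $\pi$, so each slope value corresponds to two antipodal directions, and one must argue that the set $\{v : x(v) > x(u)\}$ occupies one contiguous angular semicircle around $u$ while $\{v : x(v) < x(u)\}$ occupies the complementary one, with the two semicircles separated precisely by the vertical directions through $u$ (which contain no vertex, by the general-position assumption on $x$-coordinates). Granting that, each of the two semicircles is swept monotonically in slope, so reading the two sublists of the $x$-sorted crossing sequence — the suffix corresponding to $x(v)>x(u)$ first, then the prefix corresponding to $x(v)<x(u)$ — and offsetting appropriately gives the genuine counterclockwise order; one should double-check the orientation convention so the list truly goes counterclockwise rather than clockwise, which amounts to verifying on a single well-chosen example that increasing $x$-coordinate of the crossing along $u^{\ast}$ corresponds to increasing slope and hence to counterclockwise rotation in the relevant semicircle. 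All of this is bookkeeping; the only genuinely quantitative ingredient is the $O(n^2)$ bound for constructing $\mathcal{L}$, which is cited, and the $O(n)$-per-vertex cost of the subsequent passes, which is immediate.
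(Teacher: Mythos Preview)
Your approach is the same as the paper's: use the dual arrangement $\mathcal{L}$ to read off, for each $u$, the slope order of the lines through $u$ and the other vertices along $u^{\ast}$ in $O(n)$ time, and then convert this slope order to the counterclockwise angular order around $u$; the paper simply declares this last step ``straightforward.''

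One correction to the bookkeeping you spell out for that step: in the slope-sorted list the vertices with $x(v)>x(u)$ and those with $x(v)<x(u)$ are \emph{interleaved}, not a prefix and a suffix --- a vertex on either side of $u$ can realize any slope, so there is no single ``split index.'' The easy fix, still $O(n)$, is one pass through the slope-sorted list that tests $x(v)\lessgtr x(u)$ and appends $v$ to one of two running lists; within each list the slope order already agrees with the counterclockwise order on the corresponding open half-plane, and concatenating the two lists in the correct orientation gives the full angular order. A smaller quibble: the distinctness of the slopes of the lines $uv_i$ comes from the standing hypothesis that the vertices are in general position (no three collinear), not from ``no two edges of $G$ are parallel,'' since the $v_i$ here range over all of $V\setminus\{u\}$, not just neighbors of $u$.
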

\begin{proof}
Let $u$ be a vertex of $G$ and let $v_1,\dots, v_{n-1}$ be the vertices  $V \setminus\{ u\}$. For $1 \le i \le n-1$
be $\ell_i$ be the line through $u$ and $v_i$. Since all the lines $\ell_i$ pass through $u$, all
the points $\ell_i^\ast$ are contained in the line $u^\ast$. Moreover, the order by slope
of the lines $\ell_1,\dots,\ell_{n-1}$ coincides with the order  of the points  $\ell_1^{\ast},\dots,\ell_{n-1}^{\ast}$ 
from left to right along the line $u^{\ast}$. This order can be obtained from $\mathcal{L}$ in $O(n)$ time. 
From the slope order of the lines $\ell_i$ it is
straightforward to compute in $O(n)$ time the counterclockwise order of the vertices in
$V \setminus \{u\}$ by angle  around $u$.
\end{proof}

%

\begin{lemma}[Step 2]\label{lem:step2}
 In $O(n^2)$ time we can compute, for all pairs of vertices $u,v \in G$, $|\textsc{N}_\textsc{Left}(v, uv)|$ and $|\textsc{N}_\textsc{Right}(v, uv)|.$

\end{lemma}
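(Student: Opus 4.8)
The plan is to fix the second vertex $v$ and to compute $|\textsc{N}_\textsc{Left}(v,uv)|$ and $|\textsc{N}_\textsc{Right}(v,uv)|$ for all $n-1$ choices of $u$ simultaneously, in $O(n)$ time; summing over $v\in V$ then yields $O(n^2)$, which together with the $O(n^2)$ already spent in Lemma~\ref{lem:step1} proves the claim. Fix $v$ and let $x_0,x_1,\dots,x_{n-2}$ be the vertices of $V\setminus\{v\}$ in counterclockwise order by angle around $v$, a list made available by Lemma~\ref{lem:step1}; flag each $x_j$ according to whether $x_j\in N(v)$. The geometric fact we use is the following: for a vertex $w\notin\{u,v\}$, writing $\theta(x)$ for the direction of the ray $\ra{v}{x}$, the point $w$ lies strictly to the left of the line directed from $u$ to $v$ if and only if $\theta(w)$ lies in the open half-circle of directions that starts at $\theta(u)+\pi$ and ends at $\theta(u)+2\pi$ (angles taken mod $2\pi$). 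In the cyclic order $x_0,\dots,x_{n-2}$ this means that the neighbours of $v$ counted by $|\textsc{N}_\textsc{Left}(v,uv)|$, with $u=x_i$, are exactly the flagged vertices among $x_{a(i)},x_{a(i)+1},\dots,x_{i-1}$ (indices mod $n-1$), where $x_{a(i)}$ is the first vertex met, going counterclockwise, after the direction $\theta(x_i)+\pi$. Since no three vertices are collinear, no $x_j$ lies on the line through $u$ and $v$, so there are no boundary ambiguities.

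We then sweep $u$ once around $v$, letting $i$ run through $0,1,\dots,n-2$. As $i$ increases by one, the right endpoint of the arc moves forward by exactly one position, so $x_i$ enters the arc; and since $\theta(x_i)$, hence $\theta(x_i)+\pi$, increases, the left pointer $a(i)$ is monotone in the cyclic sense and makes exactly one full loop over the whole sweep. Thus the two pointers perform $O(n)$ moves in total. Maintaining a counter $c$ equal to the number of flagged vertices currently in the arc and updating $c$ by $\pm 1$ whenever a pointer passes a flagged vertex, we obtain $|\textsc{N}_\textsc{Left}(v,x_iv)|=c$ at each step. Finally $|\textsc{N}_\textsc{Right}(v,x_iv)|=\deg(v)-c-[\,x_i\in N(v)\,]$, because the neighbours of $v$ split into those on the left of the line $uv$, those on the right of it, and, only when $u\in N(v)$, the single neighbour $u$ lying on the line. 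After an $O(n)$ initialization of $a(0)$ and of the initial value of $c$, the sweep for a fixed $v$ runs in $O(n)$ time.

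The one delicate point is the cyclic bookkeeping underlying the monotone two-pointer argument: one must verify that $a(i)$ is genuinely nondecreasing in the cyclic sense, that each $x_j$ enters and leaves the arc exactly once during a full sweep, and that the updates to $c$ at the wraparound from position $n-2$ back to $0$ are carried out consistently. This is the part worth spelling out in detail; the remaining ingredients --- the counterclockwise orders produced in Step~1 and the half-plane characterisation of ``left'' above --- make everything else routine.
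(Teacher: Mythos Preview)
Your argument is correct. The approach differs from the paper's in which vertex you hold fixed: you fix $v$, the vertex whose neighbours are being counted, and sweep $u$ around it in the primal using the angular order from Step~1; the paper instead fixes $u$, works in the dual arrangement $\mathcal{L}$, and walks along $u^\ast$ through the intersection points $u^\ast\cap v_i^\ast$, updating $|\textsc{N}_\textsc{Left}(v_i,uv_i)|$ as $v_i$ advances. Your choice has a clear advantage: during your sweep the set of flagged vertices, namely $N(v)$, never changes, so the two-pointer update to $c$ is immediate and the amortised $O(n)$ bound is obvious. In the paper's sweep the neighbourhood $N(v_i)$ changes at every step, which makes the claimed constant-time update from $|\textsc{N}_\textsc{Left}(v_{j-1},uv_{j-1})|$ to $|\textsc{N}_\textsc{Left}(v_j,uv_j)|$ far less transparent. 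Both routes rest on the same geometric observation---a point lies on a given side of the line through $u$ and $v$ exactly when its angular position around $v$ (primal) or the above/below status of its dual line at $u^\ast\cap v^\ast$ (dual) falls in a half-interval---so the distinction is one of organisation rather than of underlying idea, but your organisation is the more robust one.
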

\begin{proof}
Let $u$ be a vertex of $G$. Let $v_1,\dots,v_{n-1}$ be the vertices of $V \setminus \{u\}$ ordered
so that the points $v_1^{\ast}\cap u^{\ast}, v_2^{\ast}\cap u^{\ast},\dots, v_{n-1}^{\ast}\cap u^{\ast}$
are encountered in this order when traversing the line $u^{\ast}$ from left to right.
If the $x$-coordinate of $u$ is less than the $x$-coordinate of $v_i$,
then the set $\textsc{N}_\textsc{Left}(v_i, uv_i)$ (resp. $\textsc{N}_\textsc{Right}(v_i, uv_i)$)
corresponds to the neighbors $v_j$ of $v_i$ such that the  lines $v_j^{\ast}$ are below (resp. above)
the point $v_i^{\ast}\cap u^{\ast}$. If the $x$-coordinate of $u$ is greater than the $x$-coordinate of $v_i$,
then the set $\textsc{N}_\textsc{Left}(v_i, uv_i)$ (resp. $\textsc{N}_\textsc{Right}(v_i, uv_i)$)
corresponds to the neighbors $v_j$ of $v_i$ such that  the lines $v_j^{\ast}$ are above (resp. below)
the point $v_i^{\ast}\cap u^{\ast}$. We compute $\textsc{N}_\textsc{Left}(v_1, uv_1)$ and $\textsc{N}_\textsc{Right}(v_1, uv_1)$
in linear time. Afterwards, iteratively for $j=2,\dots,n-1$ from $\textsc{N}_\textsc{Left}(v_{j-1}, uv_{j-1})$ and 
$\textsc{N}_\textsc{Right}(v_{j-1}, uv_{j-1})$ we compute  $\textsc{N}_\textsc{Left}(v_{j}, uv_{j})$ and 
$\textsc{N}_\textsc{Right}(v_{j}, uv_{j})$ in constant time, respectively. We store the cardinality of these sets.
Thus, in $O(n)$ time we compute  $|\textsc{N}_\textsc{Left}(v_{j}, uv_{j})|$ and 
$|\textsc{N}_\textsc{Right}(v_{j}, uv_{j})|$ for all $v_j$. By iterating over all vertices $u \in G$ 
the result follows.
\end{proof}

%
%
\begin{figure} [htb]
      \centering
      \includegraphics[scale=0.8]{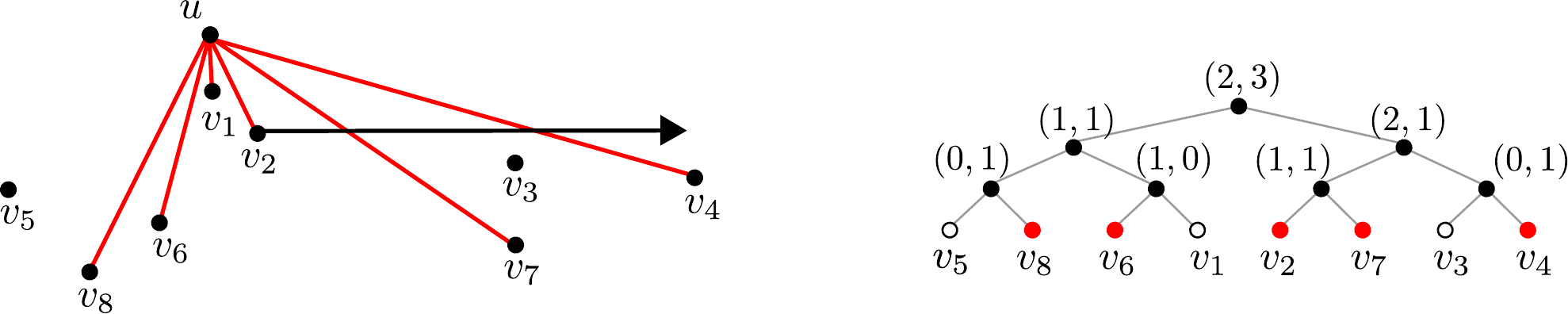}
    \caption{The information stored on $T$ at line $1$ of \textsc{HorizontalRayCrossings} for $i=2$}
    \label{fig:Ti}
\end{figure}

\begin{lemma}[Step 3]
In $O(n^2 \log n)$ time, we can compute the sets of values
\[\{|\overleftarrow{v} \cap E|:v \in V\} \text{ and } \{|\overrightarrow{v} \cap E|:v \in V\}.\]
 \end{lemma}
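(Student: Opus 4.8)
The plan is to turn the statement into a one–dimensional counting problem solved by a sweep. Note first that an edge is crossed by one of $\overleftarrow{v}$, $\overrightarrow{v}$ if and only if its two endpoints lie on opposite sides of the horizontal line through $v$, and then it is crossed by exactly one of the two rays; so if $s(v)$ is the number of these ``straddling'' edges, then $|\overrightarrow{v}\cap E| = s(v) - |\overleftarrow{v}\cap E|$, and all the $s(v)$ are obtained in $O(n^2)$ time by one upward sweep maintaining a running count of straddling edges. Hence it suffices to compute $|\overleftarrow{v}\cap E|$ for every $v$. For an edge $uw$ with $y_u<y_w$, a direct computation with the scalar cross product shows that $\overleftarrow{v}$ crosses $uw$ exactly when $y_u<y_v<y_w$ and $(w-u)\times(v-u)<0$, i.e.\ $v$ lies on the side of the line through $u$ and $w$ for which the segment meets the horizontal line through $v$ to the left of $v$. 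Grouping the edges by their lower endpoint $u$ then gives
\[
|\overleftarrow{v}\cap E| \;=\; \sum_{u\in V:\ y_u<y_v}\ \bigl|\{\, w\in N(u) \,:\, y_w>y_v \text{ and } (w-u)\times(v-u)<0 \,\}\bigr|,
\]
and each relevant edge is counted exactly once; edges incident to $v$ contribute nothing since then $y_u\ge y_v$ or $y_w\le y_v$.

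Next I would make the side condition angular. Using the counterclockwise order of $V\setminus\{u\}$ around $u$ computed in Step~1, the condition $(w-u)\times(v-u)<0$ is equivalent to saying that, measured around $u$, the direction of $w$ is obtained from the direction of $v$ by a counterclockwise rotation of less than $\pi$; equivalently, in the cyclic angular order around $u$, the vertex $w$ lies in the open arc from the position of $v$ to the position of its antipode, an arc determined by $v$ (and $u$) alone. Thus the inner count becomes: among the neighbours of $u$ with $y$-coordinate larger than $y_v$, how many lie in a prescribed cyclic interval of the angular order around $u$. For a fixed apex $u$ I would sweep the vertices with larger $y$-coordinate than $u$ in decreasing order of $y$, maintaining a Fenwick tree indexed by the angular order of $V\setminus\{u\}$ around $u$ that stores a $1$ at the position of each already–passed neighbour of $u$; at a vertex $v$ I answer the cyclic range query in $O(\log n)$ time and add the result to $|\overleftarrow{v}\cap E|$, and at a neighbour of $u$ I perform an insertion (answering the query before the insertion at a vertex that is also a neighbour of $u$, to respect the strict inequality $y_w>y_v$). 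For each apex the ranks of all vertices in the angular order around $u$, together with the antipode of each position, are precomputed in $O(n)$ time by a monotone rotating pointer, so one apex costs $O(n)$ plus $O(\deg u + n)$ Fenwick operations, i.e.\ $O(n\log n)$; summing over the $n$ apices gives $O(n^2\log n)$. Running the same sweep with the complementary cyclic interval yields $|\overrightarrow{v}\cap E|$ directly, which also checks against $s(v)$.

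The step I expect to be the main obstacle is exactly this reduction. The naive routes are both too slow: enumerating all edge crossings can cost $\Theta(n^4)$, and the textbook horizontal sweep that keeps a balanced search tree of the edges currently met by the sweep line fails because that left-to-right order changes each time two such edges cross, so the tree would need $\Theta(n^4)$ reordering updates. The point is that one never actually needs those edges sorted: the double sum can be reorganized so that each apex $u$ contributes through a single monotone one-dimensional sweep whose only geometric test --- ``is $v$ on this side of the line through $u$ and $w$'' --- has already been linearised by Step~1 into membership in a fixed angular half-turn. Carefully verifying this reorganisation --- that every crossed edge is counted exactly once, that edges incident to the query vertex or to the apex are treated correctly, and that the cyclic interval is genuinely delimited by $v$ and its antipode --- is the crux; the remaining ingredients (the Fenwick tree with cyclic range queries, the $O(n)$-amortised antipode pointers, and the separate $O(n^2)$ computation of the $s(v)$) are then routine.
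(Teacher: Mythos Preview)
Your proposal is correct and is essentially the paper's approach: both decompose the count by one endpoint of each edge (you take the lower endpoint as the apex, the paper takes the upper), and for each apex $u$ sweep the remaining vertices in decreasing $y$-coordinate while querying an $O(\log n)$ data structure indexed by the angular order around $u$ (a Fenwick tree with insertions for you, a balanced binary tree with leaf-unmarking for the paper). Since all the vertices involved in a given apex's sweep lie in a single half-plane about $u$, your cyclic half-turn query collapses to the paper's plain one-sided query and the antipode bookkeeping is not actually needed, but this difference is cosmetic.
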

 \begin{proof}
 We show how to compute $\{|\overrightarrow{v} \cap E|:v \in V\}$; the computation
 of $\{|\overleftarrow{v} \cap E|:v \in V\}$ is analogous.
  We begin by sorting the vertices of $G$ by decreasing $y$-coordinate. For every vertex $u \in V$ 
  we compute the set of values 
  \[\left \{ |\overrightarrow{v} \cap E(u)| : v \in V \text{ and } v \text{ is below } u \right \},\]
  as follows. 
  
  Let $v_1, \dots v_m$ be the vertices of $G$ that lie below $u$ in decreasing order of their
  $y$-coordinate. In $O(n)$ time we construct a binary tree, $T$, of minimum height and whose leaves are $v_1,\dots,v_m$.
The left to right order of $v_1,\dots,v_m$ in $T$ coincides with their counterclockwise order around $u$ starting from 
the first vertex  encountered when rotating $\overleftarrow{u}$ counterclockwise around $u$.
We mark all the $v_i$ that are adjacent to $u$. At each interior node $w$
of $T$ we store in two fields \textsc{MarkedLeft}($w$) and \textsc{MarkedRight}($w$), the number of 
marked leaves in the left and right subtrees of $w$, respectively. By proceeding by increasing height, these values
can be computed in $O(n)$ time for every vertex of $T$.

We execute the procedure \textsc{HorizontalRayCrossings($u$)} whose pseudocode
is described below.
Suppose that we are at the $i$-th execution of the \emph{for} of line $1$. 
We have that: a vertex $v_j$ is marked if and only if $j \ge i$
and $v_j$ is adjacent to $u$; and for every node $w \in T$,  \textsc{MarkedLeft}($w$) and \textsc{MarkedRight}($w$), store the number of 
currently marked leaves in the left and right subtrees of $w$, respectively. Lines $4$, $11$ and $15$
ensure that these invariants are kept at every execution of line $1$. 
In this iteration we compute (and store) the value of
$|\overrightarrow{v_i} \cap E(u)|$.  
Note that an edge $uv_j$ intersects $\overrightarrow{v_i}$ if and only if $v_j$ comes
after $v_i$ in the counterclockwise order around $u$ and $j > i$. 
The first condition is equivalent to $v_j$ being to the right of $v_i$ in $T$;
this happens if and only if there exists a common  ancestor $w$ in $T$ of $v_i$ and $v_j$, for which
$v_i$ is in the left subtree of $w$ and $v_j$ is in the right subtree of $w$;
see Figure~\ref{fig:Ti}. The condition of $j$ being greater than $i$ 
is equivalent to $v_j$ being marked. We count the number of nodes $v_j$
that satisfy both conditions by traversing the path from $v_i$ to the root.
We begin by unmarking $v_i$; at each node $w$ in this path we update the fields \textsc{MarkedLeft}($w$) and \textsc{MarkedRight}($w$);
if  $v_i$ is in the left subtree of $w$ we also update  our counting. Having reached the root we store the value of 
$|\overrightarrow{v_i} \cap E(u)|$ in line $20$.
Since $T$
has height $O( \log n)$ the execution of \textsc{HorizontalRayCrossings($u$)} takes $O(n \log n)$ time.
Since \[|\overrightarrow{v} \cap E| = \sum_{u \text{ is above } v} |\overrightarrow{v} \cap E(u)|,\]
the result follows.
\end{proof}
 \begin{procedure}[h]
  \caption{HorizontalRayCrossings($u$)}
   \For{$i \gets 1$ \KwTo $m$}
    {
        $cr \gets 0$ \;
        \If{$v_i$ is a neighbor of $u$}
        {
            Unmark $v_i$\;
        }
        $w \gets v_i$\;
        \While{$w \neq T.\textsc{root}$}
        {
        \eIf{$w$ is the left child of $w_\textsc{parent}$}
        {
            $cr \gets cr+\textsc{MarkedRight}(w_\textsc{parent})$\;
            \If{$v_i$ is a neighbor of $v$}
            {
                $\textsc{MarkedLeft}(w_\textsc{parent}) \gets \textsc{MarkedLeft}(w_\textsc{parent})-1$;
            }
            
        }
        {
            \If{$v_i$ is a neighbor of $u$}
            {
                $\textsc{MarkedRight}(w_\textsc{parent}) \gets \textsc{MarkedRight}(w_\textsc{parent})-1$;
            }
        }
        $w \gets w_\textsc{parent}$\;
        }
        $|\overrightarrow{v_i} \cap E(u)| \gets cr\;$
    }
 \end{procedure}

\begin{lemma}[Step 3] \label{lem:step3}
 In $O(n^2)$ time we can compute $A(u,v)$ and $B(u,v)$ for all pairs of vertices $u,v \in V$.
\end{lemma}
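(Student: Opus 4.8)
The plan is to fix a vertex $u$ and compute $B(u,v)$ and $A(u,v)$ for every $v\in V\setminus\{u\}$ (adjacent to $u$ or not) in total time $O(n)$; summing over the $n$ choices of $u$ then gives the $O(n^2)$ bound. So fix $u$ and let $v_1,\dots,v_{n-1}$ be the vertices of $V\setminus\{u\}$ in counterclockwise order around $u$, indexed so that $v_1$ is the first vertex met when the ray $\overrightarrow{u}$ is rotated counterclockwise about $u$; write $\theta_i$ for the direction of $v_i$ as seen from $u$. This order, as well as the numbers $|\textsc{N}_\textsc{Left}(v_i,uv_i)|$, $|\textsc{N}_\textsc{Right}(v_i,uv_i)|$ and $|\overrightarrow{u}\cap E|$, is already available from Lemmas~\ref{lem:step1} and~\ref{lem:step2} and the previous lemma (on horizontal ray crossings). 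For a direction $\theta$ let $\rho_\theta$ be the ray with apex $u$ and direction $\theta$, and put
\[ c(\theta):=\bigl|\{\,wx\in E:\ w\neq u,\ x\neq u,\ \text{and}\ \rho_\theta\ \text{meets the interior of}\ wx\,\}\bigr|. \]
Because $\ra{u}{v_i}=\rho_{\theta_i}$ and $\la{u}{v_i}=\rho_{\theta_i+\pi}$, and because general position forbids a third vertex on the line through $u$ and $v_i$ (so the interior of every edge incident to $v_i$ avoids that line, hence avoids both $\rho_{\theta_i}$ and $\rho_{\theta_i+\pi}$), while $c$ already ignores edges incident to $u$, we get $B(u,v_i)=c(\theta_i)$ and $A(u,v_i)=c(\theta_i+\pi)$. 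Thus it suffices to evaluate the single function $c$ at the $2(n-1)$ directions $\theta_i$ and $\theta_i+\pi$. The geometric fact that makes this possible: for a fixed edge $wx$ with $w,x\neq u$, the set of directions $\theta$ (read modulo $2\pi$) for which $\rho_\theta$ meets the interior of $wx$ is exactly the open circular arc of length less than $\pi$ bounded by the directions of $w$ and of $x$ from $u$; the no-three-collinear assumption makes this arc well defined, nonempty, of length strictly between $0$ and $\pi$, with both endpoints among $\theta_1,\dots,\theta_{n-1}$. Hence $c$ is a step function whose breakpoints all lie in $\{\theta_1,\dots,\theta_{n-1}\}$, and these directions are pairwise distinct and pairwise non-antipodal.

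Next I would track $c$ as $\theta$ sweeps once around $u$. On the arc strictly between $\theta_i$ and $\theta_{i+1}$ there is no vertex direction, so $c$ is constant there; write $c(\theta_i^-)$ and $c(\theta_i^+)$ for the value of $c$ just before and just after $\theta_i$, so that $c(\theta_i^+)=c(\theta_{i+1}^-)$. The only edges whose arc has $\theta_i$ as an endpoint are the edges incident to $v_i$, and such an edge $v_iw$ (with $w\neq u$) contributes to $c(\theta_i^-)$ exactly when $w$ lies to the right of the directed line from $u$ to $v_i$, to $c(\theta_i^+)$ exactly when $w$ lies to its left, and to $c(\theta_i)$ itself not at all; therefore
\[ c(\theta_i^-)=c(\theta_i)+|\textsc{N}_\textsc{Right}(v_i,uv_i)|\qquad\text{and}\qquad c(\theta_i^+)=c(\theta_i)+|\textsc{N}_\textsc{Left}(v_i,uv_i)|. \]
No vertex of $G$ lies on $\overrightarrow{u}$ (it would share the $y$-coordinate of $u$), so $c$ is constant near direction $0$, where it equals $|\overrightarrow{u}\cap E|$; since $v_1$ is the first vertex counterclockwise from $\overrightarrow{u}$, this gives $c(\theta_1^-)=|\overrightarrow{u}\cap E|$. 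Combining these identities with $c(\theta_i^+)=c(\theta_{i+1}^-)$ yields the base case and recurrence
\[ c(\theta_1)=|\overrightarrow{u}\cap E|-|\textsc{N}_\textsc{Right}(v_1,uv_1)|,\qquad c(\theta_{i+1})=c(\theta_i)+|\textsc{N}_\textsc{Left}(v_i,uv_i)|-|\textsc{N}_\textsc{Right}(v_{i+1},uv_{i+1})|\quad(1\le i\le n-2). \]
Iterating this computes $c(\theta_i)=B(u,v_i)$ for all $i$ in $O(n)$ time, and along the way also gives the value of $c$ on each arc $(\theta_i,\theta_{i+1})$, namely $c(\theta_i)+|\textsc{N}_\textsc{Left}(v_i,uv_i)|$.

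Finally, each direction $\theta_i+\pi$ is not a vertex direction, so it lies in the interior of a unique arc $(\theta_{\sigma(i)},\theta_{\sigma(i)+1})$, and then $A(u,v_i)=c(\theta_i+\pi)=c(\theta_{\sigma(i)})+|\textsc{N}_\textsc{Left}(v_{\sigma(i)},uv_{\sigma(i)})|$. All the indices $\sigma(i)$ are produced at once, in $O(n)$ time, by a single simultaneous cyclic scan of the two sorted sequences $(\theta_i)_i$ and $(\theta_i+\pi)_i$ (a binary search per $i$ would cost an extra logarithmic factor, which we cannot afford here). Hence the work for a fixed $u$ is $O(n)$, and summing over $u$ proves the lemma. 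The step that needs the most care is the middle one: establishing the arc description of $c$, and, at each $v_i$, correctly charging the edges that enter or leave the count to $\textsc{N}_\textsc{Left}(v_i,uv_i)$ or $\textsc{N}_\textsc{Right}(v_i,uv_i)$ — the delicate point being the value of $c$ exactly at $\theta_i$, where $\rho_{\theta_i}$ passes through the vertex $v_i$ itself; the base case and the final cyclic scan are routine.
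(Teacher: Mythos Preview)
Your proof is correct and rests on the same core idea as the paper: for each fixed $u$, perform a rotational sweep and update the running count of edges crossed by the rotating ray using the precomputed values $|\textsc{N}_\textsc{Left}(v_i,uv_i)|$ and $|\textsc{N}_\textsc{Right}(v_i,uv_i)|$ at each vertex direction, with the horizontal-ray count $|\overrightarrow{u}\cap E|$ as the base case. The only difference is organizational. The paper orders the $v_i$ as they are met by a rotating \emph{line} through $u$ (i.e.\ by direction modulo $\pi$) and maintains the counts on both rays of that line simultaneously; this is why its recurrence has two cases, according to whether the next vertex lies on the forward or the backward ray, and why $A(u,v_i)$ and $B(u,v_i)$ are produced together. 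You instead order the $v_i$ by direction modulo $2\pi$, sweep a single ray through a full turn to obtain all the values $c(\theta_i)$, and then recover the antipodal values $c(\theta_i+\pi)$ by a separate linear-time cyclic merge of $(\theta_i)$ with $(\theta_i+\pi)$. Your version trades the paper's two-case recurrence for an extra merge pass; either way the work per $u$ is $O(n)$.
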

\begin{proof}
 For every $u \in V$ we proceed as follows. Let $v_1,\dots,v_{n-1}$ be the vertices
 of $V\setminus \{u\}$, in the order as they
 are encountered by rotating a horizontal line counterclockwise around $u$. This
 order can be obtained from $\mathcal{L}$ in $O(n)$ time.
 Suppose that the angle from $\overrightarrow{u}$ to $\ra{u}{v_1}$ is less than $\pi$.
 Then the only vertex of  $V\setminus \{u\}$ in the closed wedge bounded by the rays $\overrightarrow{u}$ and 
 $\ra{u}{v_1}$ is $v_1$; and there are no vertices of $V\setminus \{u\}$ in the closed wedge bounded by 
 the rays  $\overleftarrow{u}$ and $\la{uv_1}$. This implies that 
 \[A(u,v_1)=|\overrightarrow{u} \cap E| -|N_\textsc{Right}(v_1,uv_1)| \text{ and } B(u,v_1)=|\overleftarrow{u} \cap E| .\]
 Suppose that the angle from $\overrightarrow{u}$ to $\ra{u}{v_1}$ is greater than $\pi$.
 Then there are no vertices  $V\setminus \{u\}$ in the closed wedge bounded by the rays $\overrightarrow{u}$ and 
 $\la{uv_1}$ ; and the only vertex of  $V\setminus \{u\}$ in the closed wedge bounded by 
 the rays  $\overleftarrow{u}$ and $\ra{u}{v_1}$ is $v_1$. This implies that 
 \[A(u,v_1)=|\overleftarrow{u} \cap E| -|N_\textsc{Right}(v_1,uv_1)| \text{ and } B(u,v_1)=|\overrightarrow{u} \cap E| ;\]
 
 In general, for  $i=2,\dots,n-1$, we have the following. 
 Suppose that the angle from $\ra{u}{v_{i-1}}$ to $\ra{u}{v_{i}}$ is less than $\pi$. 
 Then the only vertices of  $V\setminus \{u\}$ in the closed wedge bounded by the rays $\ra{u}{v_{i-1}}$  and 
 $\ra{u}{v_{i}}$ are $v_{i-1}$ and $v_{i}$; and there are no vertices of $V\setminus \{u\}$ in the closed wedge bounded 
by the rays $\la{uv_{i-1}}$  and 
 $\la{uv_{i}}$ . This implies that
 \[A(u,v_i)=A(u,v_{i-1})+|N_\textsc{Left}(v_{i-1},uv_{i-1})|-|N_\textsc{Right}(v_{i},uv_{i})|\] and  \[B(u,v_{i})=B(u,v_{i-1}). \]
 Suppose that the angle from $\ra{u}{v_{i-1}}$ to $\ra{u}{v_{i}}$ is greater than $\pi$. 
 Then the only vertex of  $V\setminus \{u\}$ in the closed wedge bounded by the rays $\ra{u}{v_{i-1}}$  and 
 $\la{uv_{i}}$ is $v_{i-1}$; and the only vertex of $V\setminus \{u\}$ in the closed wedge bounded
 by the rays $\la{uv_{i-1}}$  and  $\ra{u}{v_{i}}$ is $v_{i-1}$. This implies that
 \[A(u,v_i)=B(u,v_{i-1})-|N_\textsc{Right}(v_{i},uv_{i})|\] and  \[B(u,v_{i})=A(u,v_{i-1})+|N_\textsc{Left}(v_{i-1},uv_{i-1}). \]
 See Figure~\ref{fig:ABi}.
Thus, having computed $A(u,v_{i-1})$ and $B(u,v_{i-1})$ we can compute $A(u,v_i)$ and $B(u,v_i)$
in constant time. The result follows.
\end{proof}

\begin{figure} [htb]
      \centering
      \includegraphics[width=1.0\textwidth]{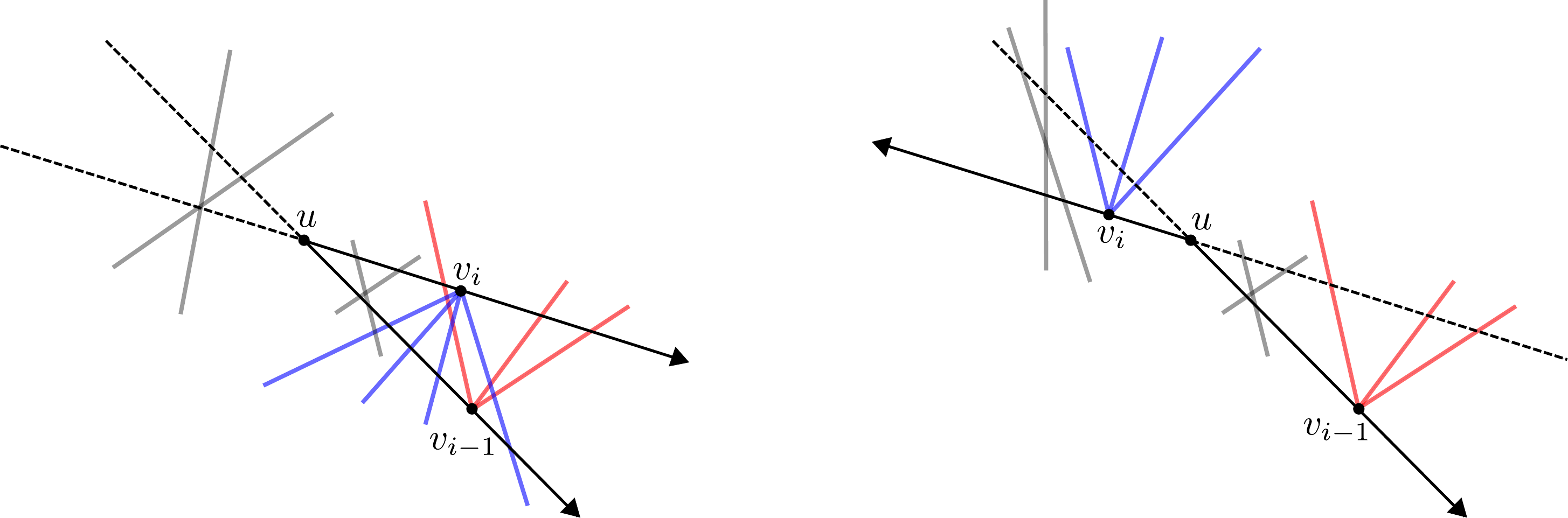}
    \caption{The two cases in computing $A(u,v_i)$ and $B(u,v_i)$ in Step~$4$}
    \label{fig:ABi}
\end{figure}

Summarizing, we have the following result.
\begin{theorem}
 $\crs{G}$ can be computed $O(n^2 \log n)$ time.
\end{theorem}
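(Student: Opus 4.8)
The plan is to assemble the four lemmas proved in the preceding steps into a single pipeline and verify that the bookkeeping of Proposition~\ref{lem:patterns} closes. First I would recall the starting point: by Proposition~\ref{lem:patterns} it suffices to compute $A(G)$ and $B(G)$, since $\crs{G} = (A(G)-B(G))/4$; and by the displayed identities $A(G) = \sum_{u\in V}\sum_{v\in N(u)} A(u,v)$ and $B(G) = \sum_{u\in V}\sum_{v\in N(u)} B(u,v)$, it suffices to have all values $A(u,v)$ and $B(u,v)$ for adjacent pairs $u,v$, and then sum over the $O(|E|) = O(n^2)$ edges.

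Next I would lay out the cost accounting step by step. Constructing the dual line arrangement $\mathcal{L}$ of the $n$ dual lines takes $O(n^2)$ time (Chapter~8 of~\cite{cgbook}). Step~1 (Lemma~\ref{lem:step1}) gives, in $O(n^2)$ time, the circular angular order around every vertex; this feeds both Step~2 and Step~3. Step~2 (Lemma~\ref{lem:step2}) computes all $|\textsc{N}_\textsc{Left}(v,uv)|$ and $|\textsc{N}_\textsc{Right}(v,uv)|$ in $O(n^2)$ time. The horizontal-ray step (the first Lemma labelled ``Step 3'') computes the sets $\{|\overleftarrow{v}\cap E|\}$ and $\{|\overrightarrow{v}\cap E|\}$ in $O(n^2\log n)$ time via \textsc{HorizontalRayCrossings}; this is the only superquadratic ingredient, and it is where the overall bound comes from. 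Finally, the recurrence step (Lemma~\ref{lem:step3}) takes the horizontal-ray counts and the $\textsc{N}_\textsc{Left}/\textsc{N}_\textsc{Right}$ counts and produces, for each fixed $u$, all $A(u,v_i)$ and $B(u,v_i)$ by walking the vertices $v_1,\dots,v_{n-1}$ in angular order and applying the constant-time update, for a total of $O(n)$ per vertex $u$ and hence $O(n^2)$ overall. Summing, the dominating term is $O(n^2\log n)$, which is the claimed running time.

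I expect the main obstacle — to the extent there is one, since all the real work is done in the lemmas — to be making the angular-sweep recurrence of Lemma~\ref{lem:step3} mesh with the wedge case analysis correctly: one must check that when the sweep ray $\ra{u}{v_i}$ crosses the horizontal through $u$ the base cases $A(u,v_1),B(u,v_1)$ are the right ones, and that when a wedge straddles the line through $u$ (the ``angle greater than $\pi$'' cases) the roles of $A$ and $B$ are swapped exactly as stated. But since Lemma~\ref{lem:step3} already establishes this, for the theorem itself I would simply cite it. The remaining point is purely arithmetic: feed the resulting $A(u,v)$, $B(u,v)$ into the two displayed sums, subtract, and divide by $4$ as licensed by Proposition~\ref{lem:patterns}; all of this is $O(n^2)$ and subsumed by the earlier bound.

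In short, the proof is: run Steps~1–4 in sequence, observe that their running times are $O(n^2)$, $O(n^2)$, $O(n^2\log n)$, and $O(n^2)$ respectively, conclude $A(G)$ and $B(G)$ are available in $O(n^2\log n)$ total time, and invoke Proposition~\ref{lem:patterns} to read off $\crs{G}$. The one genuinely nontrivial piece is the $O(n\log n)$-per-vertex horizontal-ray computation, whose correctness rests on the invariant (already verified in its proof) that at the start of iteration $i$ a leaf $v_j$ is marked iff $j\ge i$ and $v_j\sim u$, so that the root-path traversal counts exactly the edges $uv_j$ with $j>i$ that lie clockwise-after $v_i$ around $u$.
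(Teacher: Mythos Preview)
Your proposal is correct and follows exactly the paper's approach: the theorem is stated immediately after Lemmas~\ref{lem:step1}--\ref{lem:step3} and the paper simply writes \qed, since the result follows by running the four steps (with costs $O(n^2)$, $O(n^2)$, $O(n^2\log n)$, $O(n^2)$) and applying Proposition~\ref{lem:patterns}. One tiny slip in your final parenthetical: the leaves to the right of $v_i$ in $T$ are those that come \emph{after} $v_i$ in the \emph{counterclockwise} (not clockwise) order around $u$, but this does not affect the argument.
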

\qed

\section{Faster algorithms for special classes of graphs} \label{sec:quadratic}

Of the four steps of our algorithm, only Step~$3$ takes superquadratic time. Moreover, 
the numbers $|\overrightarrow{u} \cap E|$ and $|\overleftarrow{u} \cap E|$ are only used as a starting
point for computing $A(u,v)$ and $B(u,v)$ in Lemma~\ref{lem:step3}. If we can compute in $O(n)$ time, for each
vertex $u$ of $G$, the number of edges of $G$ that cross two 
open rays with apex $u$ and  in parallel but opposite directions then
we can compute $\crs{G}$ in $O(n^2)$ time. We pose the following conjecture.
\begin{conjecture}
 For every geometric graph $G$ on $n$ vertices, $\crs{G}$ can be computed
 in $O(n^2)$ time. 
\end{conjecture}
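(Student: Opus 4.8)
The plan is to reduce the conjecture to a single, purely combinatorial sub-problem and to attack that sub-problem in the word-RAM model. Following the discussion opening Section~\ref{sec:quadratic}, every piece of the algorithm except the horizontal-ray counting is already quadratic: Lemmas~\ref{lem:step1} and~\ref{lem:step2} run in $O(n^2)$ time, Lemma~\ref{lem:step3} turns the $2n$ starting values $\{|\overrightarrow{u}\cap E|\}_{u\in V}$ and $\{|\overleftarrow{u}\cap E|\}_{u\in V}$ into all the numbers $A(u,v),B(u,v)$ in $O(n^2)$ time, and Proposition~\ref{lem:patterns} then yields $\crs{G}$ in $O(n^2)$ time. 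Hence it suffices to compute those $2n$ horizontal-ray counts in $O(n^2)$ total time.

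Next I would recast the horizontal count into the form actually produced inside the algorithm. Using the decomposition $|\overrightarrow{v}\cap E|=\sum_{u\text{ above }v}|\overrightarrow{v}\cap E(u)|$, the task splits, pivot by pivot, into the following: for a fixed $u$, given all vertices below $u$ listed both in counterclockwise angular order around $u$ and in decreasing $y$-order, with the neighbors of $u$ marked, output for every vertex $v_i$ below $u$ the number of \emph{marked} vertices that come later in the angular order and lower in $y$. This is a two-dimensional dominance-counting instance on a marked planar set of size at most $n$, answered simultaneously at all of its points. The existing algorithm solves each such instance with the balanced tree of the \textsc{HorizontalRayCrossings} procedure in $O(n\log n)$ time; the conjecture is exactly the assertion that every instance can instead be solved in $O(n)$ time, for a grand total of $O(\sum_{u}\deg(u))=O(n^2)$.

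I would then pursue two complementary routes to the linear-per-pivot bound. The first is to exploit that, after Lemmas~\ref{lem:step1} and~\ref{lem:step2}, both orders are already expressed as integer ranks in $\{1,\dots,n\}$, so each instance is a concrete integer problem and its dominance counts can be sought with radix- and bucketing-based counting in the word-RAM model rather than with a comparison-based tree. The second, and the one I would bank on for a clean proof, is to exploit that the $n$ instances are not independent: their angular orders are all determined by the order type of $V$, which the dual arrangement $\mathcal{L}$ encodes in $O(n^2)$ space. Since the order type is captured by a single circular sequence of $O(n^2)$ adjacent transpositions, the same structure underlying the $O(n^2)$ algorithm for complete geometric graphs of Rote, Woeginger, Zhu, and Wang~\cite{rote1991}, the aim would be to process that sequence once and extract, amortized in $O(1)$ per transposition, the dominance counts needed at every pivot, so that the $n$ instances are solved together rather than one at a time.

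The main obstacle is precisely the gap that separates this conjecture from the theorem already proved. For a single instance, computing all dominance counts appears to cost $\Theta(n\log n)$ in the comparison and algebraic-decision-tree models, which is exactly what the current algorithm pays; any proof must therefore break that barrier, either through the integer structure of the ranks or through genuine coherence between the instances. Concretely, the crux I expect to be hardest is to show that the horizontal-ray counts, and not merely the pairwise above/below orders, change by a bounded amount under each adjacent transposition of the circular sequence, so that their incremental maintenance stays $O(1)$ amortized per transposition; a transposition alters a single orientation predicate, yet it is not a~priori clear that this perturbs only $O(1)$ of the counts we are tracking. Establishing this local-update lemma, or alternatively a word-RAM routine that computes a marked planar set's dominance counts in linear time, is the one ingredient whose presence would immediately upgrade the $O(n^2\log n)$ theorem to the $O(n^2)$ conjecture.
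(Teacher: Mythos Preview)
This statement is a \emph{conjecture} in the paper, not a theorem; the paper offers no proof of it and explicitly leaves it open. So there is no ``paper's own proof'' to compare your attempt against.

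Your write-up is not a proof either, and you say as much. The reduction you give is accurate: once the $2n$ horizontal-ray counts are known, everything else is already $O(n^2)$ by Lemmas~\ref{lem:step1}--\ref{lem:step3} and Proposition~\ref{lem:patterns}, and the per-pivot problem is exactly the marked two-dimensional dominance count you describe. But your two proposed attacks are both left as hopes rather than arguments. For the word-RAM route, note that even unrestricted inversion counting on an integer permutation is not known to be $O(n)$; the best bounds (Chan--P\u{a}tra\c{s}cu) are $O(n\sqrt{\log n})$, so ``radix and bucketing'' alone will not close the gap. For the circular-sequence route, you correctly isolate the crux---that a single adjacent transposition should perturb only $O(1)$ of the tracked counts---but you neither prove it nor give evidence that it holds; a transposition at one pivot changes the angular rank of two points simultaneously in \emph{every} other pivot's dominance instance, so the claim that only $O(1)$ counts move is far from obvious.

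In short: your proposal is a faithful restatement of why the conjecture is plausible and where the difficulty lies, not a resolution of it. The missing ingredient you name in your final paragraph is exactly the content of the conjecture.
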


To finalize the paper we mention two instances
for which is the case.

\begin{itemize}

\item \textbf{Convex Geometric Graphs} \\
A \emph{convex geometric graph} is a convex geometric graph whose vertices
are in convex position. Let $G=(V,E)$ be a convex geometric graph on $n$ vertices. 
For every $u$ in $V$, let $\overrightarrow{u}$ and $\overleftarrow{u}$ be two 
parallel open rays with apex $u$, in opposite directions and that do not
intersect the convex hull of $V$. These rays can be found
in constant time by  first computing the convex hull of $V$ (which takes $O(n \log n)$ time).
We have that $|\overrightarrow{u} \cap E|$ and $|\overleftarrow{u} \cap E|$  
are both  equal to zero in this case and $\crs{G}$ can be computed
 in $O(n^2)$ time.

\item  \textbf{Layered Graphs}\\
A \emph{layered graph} is a geometric graph whose vertex set is partitioned
into sets $L_1, \dots, L_r$ called \emph{layers} such that the following holds. 
\begin{itemize}
 \item  The vertices in layer $L_i$ have the same $y$-coordinate $y_i$\footnote{Although this violates our general position assumption, a simple
 perturbation argument suffices to apply our algorithm.} ; 
 \item  $y_1 < y_2 < \dots <y_r$;
 \item vertices in layer $L_i$ are only adjacent
to vertices in layers $L_{i-1}$ and $L_{i+1}$. 
\end{itemize} 
Suppose that $G$ is a layered graph with layers $L_1,\dots,L_r$. 
Let $G_i$ be the subgraph of $G$ induced by $L_{i}$ and $L_{i+1}$. Note that $G_i$ can be regarded
as a convex geometric graph. Since $\crs{G}=\sum_{i=1}^{r-1} \crs{G_i}$, $\crs{G}$ can be computed in $O(n^2)$ time.
Problem~33 of \cite{open} asks whether the 
 number of crossings of layered graphs can be computed in time $o(|E|\log |V|)$.    For 
 layered graphs with $\omega(n^2/\log n)$ edges we provide an affirmative answer to the question
 posed in~\cite{open}.
\end{itemize}
%

\textbf{Acknowledgments.} 
This work was initiated at the \emph{2nd Reunion of Optimization, Mathematics, and Algorithms
(ROMA 2018)}, held in Mexico city, 2018. We thank the anonymous reviewer whose comments substantially
improved the presentation of this paper.



\bibliographystyle{plain}
\bibliography{countingcrossings}

\end{document}